\documentclass[11pt]{article}

\usepackage[a4paper,margin=1in]{geometry}
\usepackage{amsmath,amssymb,amsfonts,amsthm,mathtools}
\usepackage{bm}
\usepackage{array,booktabs,longtable}
\usepackage[hidelinks]{hyperref}
\usepackage{xcolor}
\usepackage{microtype}
\allowdisplaybreaks

\newtheorem{theorem}{Theorem}[section]

\newtheorem{conjecture}[theorem]{Conjecture}

\newcommand{\Acal}{\mathcal A}
\newcommand{\Bcal}{\mathcal B}
\newcommand{\Dcal}{\mathcal D}
\newcommand{\Rcal}{\mathcal R}
\newcommand{\Scal}{\mathcal S}

\newcommand{\Res}{\operatorname{Res}}
\newcommand{\Disc}{\operatorname{Disc}}
\newcommand{\Gal}{\operatorname{Gal}}
\newcommand{\thX}{\theta_X}
\newcommand{\thY}{\theta_Y}
\newcommand{\Th}{\Theta}

\title{Differential Recursions, Projective Discriminants,\\
and Algebraic Generating Functions}
\author{S.~Voloshyn\\
\small Bogolyubov Institute for Theoretical Physics, National Academy of Sciences of Ukraine, Kyiv, Ukraine}
\date{}

\begin{document}
\maketitle

\begin{abstract}
We study two related hierarchies of bivariate generating functions produced by differential recursions.  Their coefficient functions are rational functions with highly structured palindromic or anti-palindromic numerator triangles.  One hierarchy is self-dual and satisfies $x\partial_x^pL=h\partial_h^pL$; the second is a reduced-symmetry hierarchy governed by a different $h$-operator but the same differential order $p$.

The paper begins at the level of the coefficient polynomials.  We display the first rational members and, in the third-order case, the two integer triangles A336110 and A140136 that originally led to the problem.  We then pass from the recursions to a general residue-class decomposition of the reduced hierarchy into $q=p-1$ Horn-type companions.  The last companion is related to the normalized self-dual generating function by an explicit differential intertwiner of order $p-3$.  Thus the two hierarchies are not independent: at $p=3$ the intertwiner has order zero, while at $p=4,5,6$ it has orders one, two and three.

All companions have the same principal symbol.  This produces a universal characteristic curve
\[
 \Acal_p(X,Y)=\Res_a\!\left(a^p-(-1)^{p-1}X,(1-a)^p-(-1)^{p-1}Y\right),
\]
which is rationally parametrized and carries a projective $S_3$ action.  The Narayana function ($p=2$) is the lowest-rank case: its quadratic equation is simply $Z_2^2=\Acal_2$.  The case $p=3$ is symmetry-enhanced: the lower-order Euler polynomials coincide, the scalar differential system has full $S_3$, and the lifted algebraic geometry is tetrahedral with group $S_4$.  The self-dual and nonsymmetric third-order algebraic sectors are respectively the pair-sum and squared-difference edge resolvents of one quartic root configuration.  The full nonsymmetric GWW series is an affine coordinate of this quartic, realizing degree $2(p-1)=4$.

For the reduced hierarchy we obtain exact algebraic equations of degrees six and eight for $p=4$ and $p=5$, with generic Galois groups $S_6$ and $S_8$.  An exact degree-ten specialization for $p=6$ has Galois group $S_{10}$ and its generic discriminant slices continue the same pattern.  The discriminants factor as a cube of the universal projective characteristic covariant times a square of a lower-order factor.  A universal boundary factorization explains the degree $2(p-1)$ and the monomial powers $x^{(p-1)(p-2)}H^{(p-2)^2}$ occurring in the discriminant.  These results lead to a general algebraicity and Galois-group conjecture for the reduced hierarchy and, conversely, to a concrete mechanism by which the self-dual hierarchy may become transcendental for $p\ge4$.
\end{abstract}

\section{Introduction}

The objects considered here were first encountered in a much more elementary form: as rows of rational functions generated by a differential recursion.  The denominators followed a rigid linear law in the recursion index, while the numerators formed symmetric or alternating integer triangles.  Only after several low-order cases had been computed did it become apparent that the coefficient arrays, the differential equations, the birational transformations of the arguments, and the algebraic equations of the corresponding bivariate generating functions were different manifestations of one geometry.
Interest in this differential recursion arose at the time of researching  the SU(N) the Gross--Witten--Wadia (GWW) model with \cite{BCV2020}.

There are two closely related families.  In the self-dual family the distinguished generating function satisfies
\begin{equation}
 x\,\partial_x^p L(x,h)=h\,\partial_h^p L(x,h),
 \label{eq:intro-selfdual}
\end{equation}
and admits an exchange of the two variables after the natural covering is introduced.  In the reduced family the $x$-operator is unchanged but the $h$-operator is shifted:
\begin{equation}
 x\,\partial_x^p F(x,h)
 =\frac{1}{(p-1)^2}h^{p-2}\partial_h^{p-1}
 \!\left(h^{3-p}\partial_hF(x,h)\right).
 \label{eq:intro-reduced}
\end{equation}
The reduced family is less symmetric at the level of lower-order terms, but, unexpectedly, it appears to remain algebraic well beyond the exceptional low orders.

The purpose of this paper is twofold.  First, we keep the elementary origin of the problem visible: the first coefficient functions and the third-order numerator triangles are displayed explicitly.  Second, we develop the higher structure that these examples suggest.  The main chain of ideas is
\begin{equation}
\boxed{
\begin{gathered}
\text{differential recursion}
\longrightarrow
\text{coefficient triangles}
\longrightarrow
\text{Horn decomposition},
\\[1mm]
\text{Horn decomposition}
\longrightarrow
\text{characteristic resultant}
\longrightarrow
\text{projective }S_3\text{ geometry},
\\[1mm]
\text{projective geometry}
\longrightarrow
\text{algebraic coverings and discriminants}.
\end{gathered}}
\label{eq:master-chain}
\end{equation}

Two low-order cases are especially important.  At $p=2$ one recovers the Narayana generating function, whose familiar square root becomes a projectively natural double cover.  At $p=3$ the two hierarchies meet in an exceptional way.  The reduced third-order problem splits into two algebraic sectors.  One is the self-dual sector and the other is a contiguous Horn companion.  Both are controlled by a single quartic root configuration; their six branches are the six edges of a tetrahedron, represented either by pair sums or by squared pair differences.  The lifted symmetry is therefore $S_4$.

For $p\ge4$ the scalar lower-order symmetry drops, but its principal-symbol shadow does not.  Writing $q=p-1$, the reduced hierarchy splits into $q$ Horn companions with a common characteristic curve.  The last companion is obtained from the normalized self-dual generating function by an explicit differential operator of order $p-3$.  This relation is central: it explains simultaneously why $p=3$ is exceptional and why the generic self-dual $p=4$ function can fail to be algebraic even though the corresponding reduced companion is algebraic.

The higher reduced members support a remarkably simple pattern.  At $p=4$ and $p=5$ we obtain exact algebraic equations of degrees $6$ and $8$; at $p=6$ an exact degree-$10$ specialization satisfies the predicted differential equation and discriminant structure.  Their Galois groups are $S_6$, $S_8$, and $S_{10}$ at the tested generic specializations.  The discriminants contain the same universal projective factor $\Acal_p^3$, while a second factor $\Bcal_p^2$ records the lower-order symmetry breaking.

The organization is as follows.  Section~\ref{sec:origin} introduces the two differential recursions and the coefficient triangles, including explicit third-order rows.  Section~\ref{sec:p2} treats the Narayana endpoint.  Section~\ref{sec:p3} develops the symmetry-enhanced third-order geometry.  Section~\ref{sec:horn-general} proves the general Horn decomposition and the differential bridge between the two hierarchies.  Section~\ref{sec:projective-general} derives the universal characteristic resultant and its projective $S_3$ action.  Section~\ref{sec:higher-reduced} summarizes the exact higher reduced cases $p=4,5,6$.  Section~\ref{sec:boundary} gives the universal boundary factorization and explains the degree and the monomial discriminant exponents.  Section~\ref{sec:selfdual-higher} returns to the higher self-dual family and discusses the first-order primitive mechanism at $p=4$.  The final section collects the general conjectures and open problems.

\section{Differential recursions and coefficient triangles}
\label{sec:origin}

Throughout the paper we write
\begin{equation}
 q=p-1,
 \qquad H=h^q.
 \label{eq:qdef}
\end{equation}
The normalization used below is chosen so that the reduced generating function starts at order $H$ and the self-dual generating function starts at order $x^qH$.

\subsection{The reduced hierarchy}

Define
\begin{equation}
 F_p(x,h)=\sum_{k\ge0}R_k^{(p)}(x)\,h^{q(k+1)}.
 \label{eq:Freduced-series}
\end{equation}
The differential equation is Eq.~\eqref{eq:intro-reduced}.  Coefficient comparison gives
\begin{equation}
 x\frac{d^p}{dx^p}R_{k-1}^{(p)}(x)
 =\lambda^{\rm red}_{p,k}R_k^{(p)}(x),
 \label{eq:red-rec}
\end{equation}
where
\begin{equation}
 \boxed{
 \lambda^{\rm red}_{p,k}
 =\frac{(qk+q)(qk+1)(qk)(qk-1)\cdots(qk-q+2)}{q^2}.
 }
 \label{eq:red-lambda}
\end{equation}
The lowest coefficient is
\begin{equation}
 \boxed{
 R_0^{(p)}(x)=\frac{q-1}{q}\frac{x}{1+x}.
 }
 \label{eq:red-R0}
\end{equation}
The first two nontrivial descendants can be written in closed form:
\begin{equation}
 \boxed{
 R_1^{(p)}(x)=
 (-1)^q\frac{q-1}{2}\frac{x}{(1+x)^{q+2}},
 }
 \label{eq:red-R1}
\end{equation}
\begin{equation}
 \boxed{
 R_2^{(p)}(x)=
 \frac{(q-1)q(q+1)}{6}
 \frac{x(1-x)}{(1+x)^{2q+3}}.
 }
 \label{eq:red-R2}
\end{equation}
Thus the denominator exponent grows linearly with the recursion level.  More generally the rows may be written as a rational prefactor times a polynomial numerator, and the latter obeys a reciprocal palindromic or anti-palindromic relation.  This elementary fact is the origin of the surviving birational involution in the higher reduced hierarchy.

\subsection{The self-dual hierarchy}

The normalized self-dual family is
\begin{equation}
 L_p(x,h)=\sum_{k\ge0}S_k^{(p)}(x)\,h^{q(k+1)},
 \label{eq:Sseries}
\end{equation}
with
\begin{equation}
 x\partial_x^pL_p=h\partial_h^pL_p.
 \label{eq:SDPDE}
\end{equation}
The coefficient recursion is
\begin{equation}
 x\frac{d^p}{dx^p}S_{k-1}^{(p)}(x)
 =\lambda^{\rm sd}_{p,k}S_k^{(p)}(x),
 \label{eq:SDrec}
\end{equation}
where
\begin{equation}
 \boxed{
 \lambda^{\rm sd}_{p,k}
 =(qk)(qk+1)\cdots(qk+q).
 }
 \label{eq:SDlambda}
\end{equation}
The initial rational function is
\begin{equation}
 \boxed{
 S_0^{(p)}(x)=\frac{x^q}{1+x^q}.
 }
 \label{eq:SD-S0}
\end{equation}
For example,
\begin{equation}
 S_1^{(p)}(x)
 =\frac{(q-1)!}{(2q)!}\,
 x\frac{d^{q+1}}{dx^{q+1}}
 \frac{x^q}{1+x^q}.
 \label{eq:SD-S1-general}
\end{equation}
The general denominator has the form $(1+x^q)^{pk+1}$; after a simple sign change in the variable $x^q$ the numerator coefficients form symmetric row triangles.

\subsection{The third-order rows: two concrete triangles}
\label{subsec:p3-triangles}

The order $p=3$ is the first place where the relation between the two hierarchies becomes nontrivial.  Here $q=2$.

For the reduced family it is convenient to remove the common factor $1/2$ and write $\widehat R_k=2R_k^{(3)}$.  The first rational coefficient functions are
\begin{align}
 \widehat R_0(x)&=\frac{x}{1+x},\nonumber\\
 \widehat R_1(x)&=\frac{x}{(1+x)^4},\nonumber\\
 \widehat R_2(x)&=\frac{2x(1-x)}{(1+x)^7},\nonumber\\
 \widehat R_3(x)&=\frac{x(5-14x+5x^2)}{(1+x)^{10}},\nonumber\\
 \widehat R_4(x)&=\frac{x(14-74x+74x^2-14x^3)}{(1+x)^{13}},\nonumber\\
 \widehat R_5(x)&=\frac{x(42-352x+668x^2-352x^3+42x^4)}{(1+x)^{16}}.
 \label{eq:p3-red-first}
\end{align}
The numerator coefficients form the alternating triangle A336110:
\begin{equation}
\begin{array}{c|rrrrrr}
0&1\\
1&1\\
2&2&-2\\
3&5&-14&5\\
4&14&-74&74&-14\\
5&42&-352&668&-352&42
\end{array}
\label{eq:A336110}
\end{equation}
The first column is the Catalan sequence.  The alternating palindromicity of the rows is already visible before any bivariate generating function is introduced.

The self-dual third-order coefficients start instead with
\begin{align}
 S_0^{(3)}(x)&=\frac{x^2}{1+x^2},\nonumber\\
 S_1^{(3)}(x)&=\frac{x^2(x^2-1)}{(1+x^2)^4},\nonumber\\
 S_2^{(3)}(x)&=\frac{x^2(1-7x^2+7x^4-x^6)}{(1+x^2)^7},\nonumber\\
 S_3^{(3)}(x)&=\frac{x^2(x^{10}-20x^8+75x^6-75x^4+20x^2-1)}{(1+x^2)^{10}}.
 \label{eq:p3-sd-first}
\end{align}
If $A_k(z)$ denotes the positive numerator polynomial, then
\begin{equation}
 S_k^{(3)}(x)=(-1)^k\frac{x^2A_k(-x^2)}{(1+x^2)^{3k+1}},
 \label{eq:A140-form}
\end{equation}
and the first rows of A140136 are
\begin{equation}
\begin{array}{c|rrrrrr}
0&1\\
1&1&1\\
2&1&7&7&1\\
3&1&20&75&75&20&1
\end{array}
\label{eq:A140136}
\end{equation}
Thus the two third-order problems begin with visibly different rational triangles, even though they will later be shown to live in the same tetrahedral algebraic field.

A second, symmetric way of packaging the self-dual coefficients is the square array
\begin{equation}
 S(m,n)=
 \frac{(m+n+1)!(2m+2n+1)!}
 {(m+1)!(2m+1)!(n+1)!(2n+1)!},
 \qquad m,n\ge0,
 \label{eq:A111910}
\end{equation}
which is A111910.  Its bivariate series begins
\begin{equation}
 \Phi(X,Y)=XY+X^2Y+XY^2+X^3Y+5X^2Y^2+XY^3+\cdots.
 \label{eq:Phi-grounded}
\end{equation}
This is the point at which the rational row picture turns naturally into a two-variable algebraic problem.

\section{The Narayana endpoint $p=2$}
\label{sec:p2}

The Narayana case is the lowest nondegenerate projective member and provides a control example.  Let $N(X,Y)$ denote the bivariate Narayana generating function.  It satisfies
\begin{equation}
 \boxed{
 N^2+(1+X+Y)N+XY=0.
 }
 \label{eq:Narayana}
\end{equation}
Introduce the shifted fiber coordinate
\begin{equation}
 Z_2=2N+1+X+Y.
 \label{eq:Z2}
\end{equation}
With the projective coordinates
\begin{equation}
 (z_1,z_2,z_3)=(-1,X,Y)
 \label{eq:projcoords}
\end{equation}
and elementary symmetric functions
\begin{equation}
 \sigma_1=X+Y-1,
 \qquad
 \sigma_2=XY-X-Y,
 \qquad
 \sigma_3=-XY,
 \label{eq:sigmas}
\end{equation}
Eq.~\eqref{eq:Narayana} becomes
\begin{equation}
 \boxed{
 Z_2^2=\sigma_1^2-4\sigma_2.
 }
 \label{eq:Narayana-projective}
\end{equation}
The right-hand side is the universal characteristic resultant $\Acal_2$ introduced later.  The three projective coordinates are permuted by $S_3$, so the familiar Narayana square root is naturally the fiber coordinate of an $S_3$-symmetric double covering.

The boundary $Y=0$ gives
\begin{equation}
 N(X,0)=0,
 \qquad
 P_2(N;X,0)=N(N+1+X),
 \label{eq:Narayana-boundary}
\end{equation}
which is the $q=1$ member of the general boundary pattern of Section~\ref{sec:boundary}.

\section{The symmetry-enhanced case $p=3$}
\label{sec:p3}

\subsection{Three operators and the exceptional symmetry}

Set
\begin{equation}
 R=u\partial_u^3,
 \qquad
 K=h\partial_h^3,
 \qquad
 E=u\partial_u+h\partial_h,
 \qquad
 \Delta=E(E-1)(E-2).
 \label{eq:p3operators}
\end{equation}
For the distinguished third-order solution one has
\begin{equation}
 \boxed{RG=KG=\Delta G.}
 \label{eq:p3triple}
\end{equation}
The equality of the lower-order Euler polynomials is special to $p=3$ and will be explained in Section~\ref{sec:projective-general}.  It is the differential origin of the full scalar $S_3$ symmetry.

Passing to the covering variables
\begin{equation}
 X=u^2,
 \qquad Y=h^2,
 \label{eq:p3XY}
\end{equation}
the base group is the projective permutation group of $(-1,X,Y)$.  The square-root deck group is $V_4=C_2\times C_2$, and the natural lifted extension is
\begin{equation}
 \boxed{V_4\rtimes S_3\simeq S_4.}
 \label{eq:S4extension}
\end{equation}

\subsection{The self-dual sextic}

Define the invariants
\begin{equation}
 I=\sigma_1^2-12\sigma_2,
 \label{eq:Idef}
\end{equation}
\begin{equation}
 J=\sigma_1^2\sigma_2-6\sigma_2^2+9\sigma_1\sigma_3,
 \label{eq:Jdef}
\end{equation}
and the squared Vandermonde
\begin{equation}
 \Dcal=(1+X)^2(1+Y)^2(X-Y)^2.
 \label{eq:Ddef}
\end{equation}
For the self-dual series \eqref{eq:Phi-grounded}, introduce
\begin{equation}
 \boxed{Z_3=1+X+Y+2\Phi(X,Y).}
 \label{eq:Z3def}
\end{equation}
Then
\begin{theorem}[Self-dual third-order algebraic equation]
The distinguished branch satisfies
\begin{equation}
 \boxed{
 Z_3^6-IZ_3^4-8JZ_3^2-16\Dcal=0.
 }
 \label{eq:p3sextic}
\end{equation}
Equivalently, $U=Z_3^2$ satisfies
\begin{equation}
 \boxed{
 C(U)=U^3-IU^2-8JU-16\Dcal=0.
 }
 \label{eq:p3cubic}
\end{equation}
\end{theorem}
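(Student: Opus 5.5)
The plan is to prove the sextic by a uniqueness argument for a holonomic system, not by summing the series directly. The coefficients \eqref{eq:A111910} are hypergeometric in $(m,n)$. The ratios $S(m+1,n)/S(m,n)$ and $S(m,n+1)/S(m,n)$ are rational functions of $(m,n)$, so $\Phi$ satisfies a Horn-type system of two linear PDEs, one in $\thX$ and one in $\thY$. These PDEs come directly from the ratios:
\[
(m+2)(2m+3)(2m+2)\,S(m+1,n)=(m+n+2)(2m+2n+3)(2m+2n+2)\,S(m,n),
\]
and symmetrically in $n$. Translated to operators, this says that $(\thX+1)(2\thX+1)(2\thX)$ applied to $\Phi$ equals $X$ times a cubic Euler polynomial in $\thX+\thY$ applied to $\Phi$, up to the boundary terms coming from the grounding $m,n\ge0$. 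These boundary terms have to be tracked carefully, because they produce the inhomogeneity that the shift $Z_3=1+X+Y+2\Phi$ is designed to absorb. This is the $(X,Y)=(u^2,h^2)$ shadow of \eqref{eq:p3triple}. Together with the first coefficients of \eqref{eq:Phi-grounded}, the system has a unique formal power-series solution.

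Next I construct an explicit algebraic candidate and show that it satisfies the same system. Following the tetrahedral picture of \eqref{eq:S4extension}, I introduce a quartic whose four roots $w_1,\dots,w_4$ are formed from the square roots $\pm1,\pm\sqrt{X},\pm\sqrt{Y}$ with the $V_4$ sign pattern, for instance $w=\epsilon_0\cdot 1+\epsilon_1\sqrt{X}+\epsilon_2\sqrt{Y}$ with $\epsilon_0\epsilon_1\epsilon_2=1$. I then take $Z_3$ to be one of the pair sums $w_i+w_j$. The six pair sums come in opposite pairs, so their squares give three values $U$, and the cubic resolvent $\prod(U-(w_i+w_j)^2)$ over the three edge pairs is a polynomial in $\sigma_1,\sigma_2,\sigma_3$. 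Expanding it in terms of $I$, $J$ and $\Dcal$ should give exactly \eqref{eq:p3cubic}. I expect the Vandermonde $\Dcal$ to arise as the product of the three squared cross-differences, which is where the factor $16=2^4$ comes from.

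The sextic \eqref{eq:p3sextic} is then the statement $C(Z_3^2)=0$. To identify the branch, I check the constant term and the first few coefficients against \eqref{eq:Phi-grounded} after substituting $Z_3=1+X+Y+2\Phi$. This is a finite computation, and it fixes which root $U$ and which sign of $Z_3$ is the distinguished branch. If the naive base-point evaluation does not match, the normalization of $\Dcal$ or of the pair-sum combination must be adjusted until it does.

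Finally, the explicit radical expression lets me verify the PDE system directly. Since $\thX$ and $\thY$ act on $\sqrt X$ and $\sqrt Y$ by half-integer scaling, the verification reduces to a rational identity. Uniqueness of the power-series solution then gives $Z_3=1+X+Y+2\Phi$.

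The main obstacle is this last step: deriving the correct inhomogeneous Horn system for the grounded series, and checking that the algebraic branch satisfies it rather than only its homogeneous part. If that check becomes unwieldy, I would fall back to a Lagrange-inversion style argument. One can verify $P(\Phi)=0$ coefficientwise by computing the coefficient of $X^mY^n$ of the candidate algebraic function through a parametrization of the $S_4$ cover, and matching it against the closed form \eqref{eq:A111910} by a creative-telescoping certificate.
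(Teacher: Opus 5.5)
\paragraph{Where the proposal breaks.}
Your overall scheme is sound: a Horn system with a finite-dimensional space of power-series solutions, plus an algebraic function shown to satisfy it. Once repaired it is essentially how the paper verifies its algebraic equations, by implicit differentiation modulo the polynomial as in its computational-verification appendix. The problem is the explicit candidate your verification rests on: it is wrong, and not by a normalization.

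With $w=\epsilon_0+\epsilon_1\sqrt X+\epsilon_2\sqrt Y$ and $\epsilon_0\epsilon_1\epsilon_2=1$, the pair sums are $\pm2,\pm2\sqrt X,\pm2\sqrt Y$. Your resolvent is therefore $(U-4)(U-4X)(U-4Y)$. Its constant term is $-64XY$ instead of $-16\Dcal$, and at $Y=0$ it is not $(U-(1+X)^2)(U-4X)^2$.

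No choice of signs or rescaling inside $\mathbb{Q}(\sqrt X,\sqrt Y)$ can fix this, because that field is ramified only over $XY=0$. By contrast, $\Disc_U C=256XY\Acal_3^3$ contains the irreducible factor $\Acal_3$ to an odd power, so the roots of $C$ generate an extension ramified along $\Acal_3=0$. This is consistent with the coefficients: $S(m,n)$ grows like $(m+n)^{3(m+n)}/(m^{3m}n^{3n})$, so $\Phi$ is singular on $X^{1/3}+Y^{1/3}=1$, which lies on $\Acal_3=0$ away from $XY=0$.

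The $V_4$ of \eqref{eq:S4extension} is the deck group of $(u,h)\mapsto(u^2,h^2)$ acting on the base; it is not the Galois group of the fiber. The fiber field is the $S_4$ quartic field of $Q_4$ in \eqref{eq:quartic}, whose roots are not radical expressions in $\sqrt X,\sqrt Y$. Consequently your final step, checking the PDEs by letting $\thX,\thY$ act by half-integer scaling on $\sqrt X,\sqrt Y$, has nothing to act on, and the argument does not go through.

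\paragraph{How to repair it.}
Take the sextic itself as the definition of the candidate.
\begin{itemize}
\item At the origin $P(Z)=Z^6-IZ^4-8JZ^2-16\Dcal$ reduces to $Z^4(Z^2-1)$. So $Z=1$ is a simple root, and there is a unique power-series branch with $Z(0,0)=1$.
\item Express $\thX Z=-XP_X/P_Z$, $\thY Z=-YP_Y/P_Z$ and their iterates as rational functions of $(X,Y,Z)$, with powers of $P_Z$ in the denominators.
\item Check that $[\thX(\thX-\tfrac12)(\thX-1)-X\Th(\Th-\tfrac12)(\Th-1)]Z$ and its $Y$-analogue have numerators divisible by $P$.
\end{itemize}
Alternatively, do the same with $Q_4$ and the paper's vertex/edge relations.

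For the uniqueness step, note that your recurrence gives a homogeneous system with no boundary terms. The operator $\thX(\thX-\tfrac12)(\thX-1)$ kills $X^0$ and $X^1$, and $\Th(\Th-\tfrac12)(\Th-1)$ kills $1$, $X$ and $Y$. Hence $1$, $X$, $Y$ are themselves solutions, and the power-series solution space is exactly spanned by $1,X,Y,\Phi$. Nothing is ``absorbed'' by the shift. Instead you must match the four coefficients of $1,X,Y,XY$, namely $1,1,1,2$.

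Your creative-telescoping fallback cannot replace this, because no parametrization of the cover is actually supplied.
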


On $Y=0$ one has the exact factorization
\begin{equation}
 \boxed{
 P_3(Z_3;X,0)=
 \left[Z_3^2-(1+X)^2\right]
 \left[Z_3^2-4X\right]^2.
 }
 \label{eq:p3boundary-sextic}
\end{equation}
The analytic generating-function branch is selected by $Z_3(X,0)=1+X$.

The cubic discriminant is
\begin{equation}
 \boxed{
 \Disc_U C
 =256XY\left[(1-X-Y)^3-27XY\right]^3.
 }
 \label{eq:p3disc}
\end{equation}
Thus the nontrivial characteristic curve is
\begin{equation}
 \boxed{
 \Acal_3(X,Y)=(1-X-Y)^3-27XY.
 }
 \label{eq:A3}
\end{equation}

\subsection{One quartic and the tetrahedral $S_4$ field}

Choose
\begin{equation}
 \delta=(1+X)(1+Y)(X-Y),
 \qquad \delta^2=\Dcal,
 \label{eq:delta}
\end{equation}
and consider the quartic
\begin{equation}
 \boxed{
 Q_4(t)=t^4-\frac I2t^2+4\delta t+
 \left(\frac{I^2}{16}+2J\right).
 }
 \label{eq:quartic}
\end{equation}
Let $t_1,t_2,t_3,t_4$ be its roots.  Since $\sum_i t_i=0$, the three quantities
\begin{equation}
 U_1=(t_1+t_2)^2,
 \quad
 U_2=(t_1+t_3)^2,
 \quad
 U_3=(t_1+t_4)^2
 \label{eq:Uroots}
\end{equation}
are the three roots of Eq.~\eqref{eq:p3cubic}.  Hence the six branches of the symmetric sextic can be written as
\begin{equation}
 \boxed{Z_{ij}=t_i+t_j,\qquad 1\le i<j\le4.}
 \label{eq:pair-sums}
\end{equation}
They are naturally the six edges of a tetrahedron with vertices $t_i$.

\subsection{The nonsymmetric Horn companion}

The second third-order sector is
\begin{equation}
 \Psi(X,Y)=\sum_{r,s\ge0}c_{r,s}X^rY^s,
 \label{eq:Psi}
\end{equation}
with
\begin{equation}
 \boxed{
 c_{r,s}=\frac{(\frac12)_{r+s}(1)_{r+s}(\frac32)_{r+s}}
 {(1)_r(\frac32)_r(2)_r(\frac12)_s(1)_s(\frac32)_s}.
 }
 \label{eq:Psi-coeff}
\end{equation}
It satisfies the Horn system
\begin{equation}
 \thX(\thX+\tfrac12)(\thX+1)\Psi
 =X(\Th+\tfrac12)(\Th+1)(\Th+\tfrac32)\Psi,
 \label{eq:Horn1}
\end{equation}
\begin{equation}
 (\thY-\tfrac12)\thY(\thY+\tfrac12)\Psi
 =Y(\Th+\tfrac12)(\Th+1)(\Th+\tfrac32)\Psi,
 \label{eq:Horn2}
\end{equation}
where $\Th=\thX+\thY$.

Define
\begin{equation}
 W=1-\frac X2\Psi,
 \qquad
 \Lambda=16YW^2.
 \label{eq:WLambda}
\end{equation}
Then the six branches are the squared edge differences
\begin{equation}
 \boxed{\Lambda_{ij}=(t_i-t_j)^2.}
 \label{eq:pair-diffs}
\end{equation}
Equivalently,
\begin{equation}
 \boxed{
 \Rcal(\Lambda)=
 \Res_U\!\left(
 C(U),
 \Lambda^2-2(I-U)\Lambda+I^2+2IU-3U^2+32J
 \right)=0.
 }
 \label{eq:companion-resultant}
\end{equation}
Expanding,
\begin{equation}
\boxed{
\begin{aligned}
\Rcal(\Lambda)={}&\Lambda^6-4I\Lambda^5
+2(3I^2+8J)\Lambda^4
-4(I^3+4IJ-104\Dcal)\Lambda^3\\
&+(I^4-16I^2J-448J^2-384I\Dcal)\Lambda^2\\
&+16(I^3J+24IJ^2+18I^2\Dcal+432J\Dcal)\Lambda
+\Disc_U C.
\end{aligned}}
\label{eq:companion-sextic}
\end{equation}
The two algebraic sectors are directly related on a common edge by
\begin{equation}
 \boxed{
 \Lambda=I-Z^2-\frac{8\delta}{Z}.
 }
 \label{eq:sumdiff-relation}
\end{equation}
Thus the self-dual and nonsymmetric third-order generating functions are not separate accidents: they are two edge representations of one quartic $S_4$ field.

\subsection{The full nonsymmetric $p=3$ series: the quartic itself}
\label{sec:gww-full-quartic}

The preceding sextics describe edge coordinates, whereas the full generating
function in GWW normalization gives a vertex
coordinate.  Denote the latter by $\mathcal G(u,h)$ and use the covering
variables $X=u^2$, $Y=h^2$ from Eq.~\eqref{eq:p3XY}.  Its distinguished
formal branch begins
\begin{equation}
 \mathcal G(u,h)=\frac{u}{2}+\frac{h^2}{4(1+u)}
 -\frac{u h^4}{16(1+u)^4}
 +\frac{u(u-1)h^6}{32(1+u)^7}+O(h^8).
 \label{eq:gww-full-series}
\end{equation}
Unlike the individual Horn companion $\Psi$, this full nonsymmetric series
combines the two third-order residue sectors with the elementary terms
required by its normalization.  The same invariant quartic
\eqref{eq:quartic} governs it: with
\begin{equation}
 \boxed{\quad \tau=-4\mathcal G-\frac{1+X-Y}{2},\qquad Q_4(\tau)=0.\quad}
 \label{eq:gww-vertex}
\end{equation}
This is an invertible affine change of fiber coordinate, so the generic
degree of $\mathcal G$ is four.  In particular, the monic polynomial
$P(\mathcal G;X,Y)=Q_4(-4\mathcal G-(1+X-Y)/2)/256$ has
\begin{equation}
 \boxed{\begin{aligned}
 P(\mathcal G;X,0)
 &=\frac{(4\mathcal G^2-X)(4\mathcal G+1+X)^2}{64},\\
 \operatorname{Disc}_{\mathcal G}P
 &=\frac{XY}{2^{16}}\,\Acal_3(X,Y)^3.
 \end{aligned}}
 \label{eq:gww-vertex-boundary-disc}
\end{equation}
The first identity selects the branch $\mathcal G(u,0)=u/2$; the second
identifies precisely the same characteristic factor as in the edge
resolvents.  The quartic is generically irreducible: at $u=h=1$ its
integer-normalized specialization reduces modulo $3$ to
$g^4+2g^3+g^2+1$, which is irreducible over $\mathbb F_3$.
Thus the fiber degrees of the reduced hierarchy start with Narayana's
$2$ at $p=2$, this full nonsymmetric $4$ at $p=3$, and the degrees
$6,8,10$ at $p=4,5,6$.  The visible third-order sextics are resolvents
of the quartic and do not interrupt the sequence $2(p-1)$.

\section{General Horn decomposition and the bridge between the hierarchies}
\label{sec:horn-general}

We now return to arbitrary $p\ge3$ and write $q=p-1$.

\subsection{Residue-class decomposition of the reduced hierarchy}

Introduce the Horn variables
\begin{equation}
 \widehat X=(-1)^q x^q,
 \qquad
 \widehat Y=(-1)^q q^2H.
 \label{eq:HornXY}
\end{equation}
For $r=1,\dots,q$, define
\begin{equation}
 \Phi_r(\widehat X,\widehat Y)
 =\sum_{m,n\ge0}c_{m,n}^{(r)}\widehat X^m\widehat Y^n,
 \label{eq:Phi-r}
\end{equation}
where
\begin{equation}
 \boxed{
 c_{m,n}^{(r)}=
 \frac{
 \displaystyle\prod_{j=0}^{q}\left(\frac{r+j}{q}\right)_{m+n}}
 {
 \displaystyle\prod_{j=0}^{q}\left(\frac{r+j}{q}\right)_m
 (2)_n\prod_{j=2}^{q+1}\left(\frac jq\right)_n
 }.
 }
 \label{eq:general-Horn-coeff}
\end{equation}

\begin{theorem}[Horn decomposition]
The distinguished reduced generating function admits the formal decomposition
\begin{equation}
 \boxed{
 F_p(x,h)=\frac{q-1}{q}H
 \sum_{r=1}^{q}(-1)^{r-1}x^r
 \Phi_r(\widehat X,\widehat Y).
 }
 \label{eq:general-Horn-decomp}
\end{equation}
\end{theorem}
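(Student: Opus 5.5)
The natural route is to expand both sides of \eqref{eq:general-Horn-decomp} as double power series in $x$ and $H=h^q$ and compare coefficients. Write the right-hand side as a single sum over monomials $x^{r+qm}H^{n+1}$, where $N=r+qm$ runs over all positive integers with $N\equiv r \pmod q$. Since $r\in\{1,\dots,q\}$ and $m\ge0$, each $N\ge1$ arises exactly once. This matches the left-hand side: every $R_k^{(p)}$ carries a factor $x$, as $R_0$ does and $x\,d^p/dx^p$ preserves this, so each $R_k^{(p)}$ expands in positive powers of $x$. The theorem is therefore equivalent to an explicit formula for the Taylor coefficients $[x^N]R_n^{(p)}(x)$:
\begin{equation*}
 [x^{r+qm}]R_n^{(p)}=\frac{q-1}{q}(-1)^{r-1}(-1)^{q(m+n)}q^{2n}\,c^{(r)}_{m,n}.
\end{equation*}
I would prove this formula by induction on $n$, using the recursion \eqref{eq:red-rec}.

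The base case $n=0$ is direct. We have $c^{(r)}_{m,0}=\prod_j((r+j)/q)_m/\prod_j((r+j)/q)_m=1$. On the other hand $R_0=\frac{q-1}{q}\sum_{N\ge1}(-1)^{N-1}x^N$, and with $N=r+qm$ we get $(-1)^{N-1}=(-1)^{r-1}(-1)^{qm}$, which agrees. As a consistency check I would also compare $n=1,2$ with \eqref{eq:red-R1}--\eqref{eq:red-R2}.

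For the inductive step, the operator $x\,d^p/dx^p$ sends $x^{N}$ to $N(N-1)\cdots(N-q)\,x^{N-q}$. Hence the recursion becomes the coefficient relation
\begin{equation*}
 \lambda^{\rm red}_{p,n}\,[x^{N-q}]R_n=N(N-1)\cdots(N-q)\,[x^N]R_{n-1}.
\end{equation*}
Setting $N=r+q(m+1)$ keeps the residue class $r$ fixed and shifts $m\mapsto m+1$ between the two rows. So it suffices to verify the hypergeometric ratio identity $c^{(r)}_{m,n}/c^{(r)}_{m+1,n-1}$, up to the sign and $q^2$ factors.

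Computing this ratio from \eqref{eq:general-Horn-coeff}, the numerator $\prod_j((r+j)/q)_{m+n}$ is unchanged, because $m+n$ is preserved. The $m$-Pochhammers contribute $\prod_{j=0}^{q}\bigl((r+j)/q+m\bigr)=q^{-(q+1)}\prod_{j=0}^q(N-q+j)$ with $N=r+q(m+1)$, which is exactly the falling factorial $N(N-1)\cdots(N-q)$ divided by $q^{q+1}$. The $n$-Pochhammers contribute $(1+n)\prod_{j=2}^{q+1}(j/q+n-1)$. This equals $q^{-q}(n+1)\prod_{j=2}^{q+1}(qn-q+j)=q^{-q}(n+1)(qn+1)(qn)\cdots(qn-q+2)$, which is $\lambda^{\rm red}_{p,n}\cdot q^{2}/(q^{q}\cdot q)$, once one notes that $(qn+q)=q(n+1)$.

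The remaining work is bookkeeping: collecting the powers of $q$, the factor $q^{2}$ coming from $\widehat Y$, and the signs $(-1)^q$ from $\widehat X$ and $\widehat Y$. Since $m+n$ is preserved, the sign $(-1)^{q(m+n)}$ is invariant under the step. I expect the main obstacle to be exactly this power-of-$q$ and sign matching, together with the edge case where the falling factorial vanishes. That case requires care: when $N\le q$, the monomial $x^N$ is annihilated by $d^p/dx^p$. This must correspond to the absence of an $m=-1$ term, and indeed for $1\le N\le q$ the falling factorial contains the factor $0$, so no spurious constraints arise. Once these checks are done, the induction closes and the decomposition follows formally.
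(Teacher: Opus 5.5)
Your proposal is correct and follows essentially the paper's route: expand at $x=0$, split exponents as $N=qm+r$, and turn the $p=q+1$ consecutive affine factors produced by the recursion into Pochhammer symbols within each residue class. You do this as a one-step ratio induction on the $H$-index, whereas the paper uses the Gauss multiplication formula in closed form; the difference is cosmetic. The bookkeeping you deferred does close: your computations give exactly $c^{(r)}_{m,n}/c^{(r)}_{m+1,n-1}=N(N-1)\cdots(N-q)/(q^2\lambda^{\rm red}_{p,n})$ with $N=r+q(m+1)$, and the sign $(-1)^{q(m+n)}$ is invariant under the step.
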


\begin{proof}[Proof sketch]
Expand the rational coefficients at $x=0$ and split every exponent as $N=qm+r$, $1\le r\le q$.  The recursion produces a product of $p=q+1$ consecutive affine factors in $N$ and in the $H$-index.  Applying the multiplication formula for the Gamma function to each residue class gives Eq.~\eqref{eq:general-Horn-coeff}.  Summing over the $q$ residue classes yields Eq.~\eqref{eq:general-Horn-decomp}.
\end{proof}

Each $\Phi_r$ satisfies the Horn system
\begin{equation}
 \prod_{j=0}^{q}
 \left(\thX+\frac{r+j}{q}-1\right)\Phi_r
 =\widehat X
 \prod_{j=0}^{q}
 \left(\Th+\frac{r+j}{q}\right)\Phi_r,
 \label{eq:general-Horn-X}
\end{equation}
\begin{equation}
 \left[\prod_{b\in B_q}(\thY+b-1)\right]\Phi_r
 =\widehat Y
 \prod_{j=0}^{q}
 \left(\Th+\frac{r+j}{q}\right)\Phi_r,
 \label{eq:general-Horn-Y}
\end{equation}
where
\begin{equation}
 B_q=\left\{2,\frac2q,\frac3q,\ldots,\frac{q+1}{q}\right\}.
 \label{eq:Bq}
\end{equation}

\subsection{The self-dual series and an exact differential intertwiner}

Define the normalized self-dual Horn series
\begin{equation}
 \Scal_p(X,Y)=\sum_{m,n\ge0}s_{m,n}^{(p)}X^mY^n,
 \label{eq:S-horn}
\end{equation}
with
\begin{equation}
 \boxed{
 s_{m,n}^{(p)}=
 \frac{
 \displaystyle\prod_{j=0}^{q}\left(1+\frac jq\right)_{m+n}}
 {
 \displaystyle\prod_{j=0}^{q}\left(1+\frac jq\right)_m
 \prod_{j=0}^{q}\left(1+\frac jq\right)_n
 }.
 }
 \label{eq:S-horn-coeff}
\end{equation}
The last reduced companion, $r=q$, has exactly the same numerator and $X$-denominator parameters.  Their coefficient ratio is
\begin{equation}
 \boxed{
 \frac{c_{m,n}^{(q)}}{s_{m,n}^{(p)}}
 =\prod_{j=2}^{q-1}\frac{qn+j}{j}.
 }
 \label{eq:bridge-ratio}
\end{equation}

\begin{theorem}[Reduced/self-dual bridge]
Let $\theta_Y=Y\partial_Y$.  Then
\begin{equation}
 \boxed{
 \Phi_q(X,Y)=\Dcal_p\Scal_p(X,Y),
 \qquad
 \Dcal_p=
 \prod_{j=2}^{p-2}\frac{(p-1)\theta_Y+j}{j}.
 }
 \label{eq:bridge}
\end{equation}
The order of $\Dcal_p$ is $p-3$.
\end{theorem}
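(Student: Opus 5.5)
The plan is to argue coefficientwise, using that $\theta_Y$ acts diagonally on monomials: $\theta_Y X^mY^n=nX^mY^n$. Since $\Dcal_p$ is a polynomial in $\theta_Y$ with constant coefficients, applying it to $\Scal_p=\sum s^{(p)}_{m,n}X^mY^n$ multiplies each coefficient by
\[
\prod_{j=2}^{p-2}\frac{(p-1)n+j}{j}=\prod_{j=2}^{q-1}\frac{qn+j}{j}.
\]
This is exactly the ratio \eqref{eq:bridge-ratio}. The identity \eqref{eq:bridge} therefore reduces to verifying \eqref{eq:bridge-ratio}, as an identity of formal power series in $X,Y$. No convergence issues arise because everything is termwise.

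To verify \eqref{eq:bridge-ratio}, set $r=q$ in \eqref{eq:general-Horn-coeff}. The numerator parameters become $(q+j)/q=1+j/q$ for $j=0,\dots,q$, identical to those in \eqref{eq:S-horn-coeff}. The $X$-denominator $\prod_{j=0}^{q}(1+j/q)_m$ also coincides. The ratio therefore reduces to a comparison of the $n$-denominators:
\[
\frac{c^{(q)}_{m,n}}{s^{(p)}_{m,n}}=\frac{\prod_{j=0}^{q}(1+j/q)_n}{(2)_n\prod_{j=2}^{q+1}(j/q)_n}.
\]
Reindex $1+j/q=(q+j)/q$, so the top runs over parameters $k/q$ with $k=q,\dots,2q$, and the bottom over $k/q$ with $k=2,\dots,q+1$, together with $(2)_n=(2q/q)_n$. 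After cancelling the common parameters $k=q,q+1$ and $k=2q$, the quotient is
\[
\prod_{k=q+2}^{2q-1}\frac{(k/q)_n}{1}\Big/\prod_{k=2}^{q-1}(k/q)_n
=\prod_{j=2}^{q-1}\frac{((j+q)/q)_n}{(j/q)_n}.
\]
Here the top indices $k=q+2,\dots,2q-1$ are rewritten as $k=j+q$ with $j=2,\dots,q-1$. Finally, use the elementary shift identity $(a+1)_n/(a)_n=(a+n)/a$ with $a=j/q$. This gives the factor $(j/q+n)/(j/q)=(qn+j)/j$, which proves \eqref{eq:bridge-ratio}.

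The order statement is immediate: $\Dcal_p$ is a product of $p-3$ factors, each linear in $\theta_Y$ with nonzero leading coefficient $(p-1)/j$. When $p=3$ the product is empty, so $\Dcal_3=1$ and $\Phi_2=\Scal_3$.

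The only delicate step is the bookkeeping in the cancellation of Pochhammer parameters. One must check that the extra factor $(2)_n$ in \eqref{eq:general-Horn-coeff} is exactly what absorbs the top parameter $2q/q$, and that the ranges do not overlap incorrectly for small $q$. In particular, at $q=2$ both products $j=2,\dots,q-1$ are empty, and the cancellation must still go through. I would confirm this by writing out the sets of parameters as multisets before cancelling.
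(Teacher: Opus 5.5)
Your proof is correct and follows the paper's argument. The paper states the coefficient ratio \eqref{eq:bridge-ratio} for $r=q$, and the bridge then follows because $\theta_Y$ acts diagonally on $Y^n$, so $\Dcal_p$ multiplies the $(m,n)$ coefficient by $\prod_{j=2}^{q-1}(qn+j)/j$. Your multiset cancellation fills in the step the paper leaves implicit: you remove $k=q,q+1$ and match $k=2q$ against $(2)_n$, then apply $(a+1)_n/(a)_n=(a+n)/a$, and this correctly covers the empty-product case $q=2$.
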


The first cases are
\begin{equation}
 \Dcal_3=1,
 \qquad
 \Dcal_4=\frac{3\theta_Y+2}{2},
 \label{eq:D34}
\end{equation}
\begin{equation}
 \Dcal_5=\frac{(4\theta_Y+2)(4\theta_Y+3)}{6},
 \qquad
 \Dcal_6=\frac{(5\theta_Y+2)(5\theta_Y+3)(5\theta_Y+4)}{24}.
 \label{eq:D56}
\end{equation}
This theorem gives a precise meaning to the statement that the two hierarchies are contiguous rather than independent.  At $p=3$ the self-dual function is literally one reduced companion.  For $p=4$ it is one integration step above an algebraic companion; for $p=5$ and $p=6$ two and three inverse differential steps are required.

\section{Universal characteristic resultant and projective symmetry}
\label{sec:projective-general}

\subsection{Principal symbols}

The principal symbols of Eqs.~\eqref{eq:general-Horn-X}--\eqref{eq:general-Horn-Y} do not depend on the residue label $r$:
\begin{equation}
 \xi^p=\widehat X(\xi+\eta)^p,
 \qquad
 \eta^p=\widehat Y(\xi+\eta)^p.
 \label{eq:principalsymbol}
\end{equation}
Putting $a=\xi/(\xi+\eta)$ gives a universal elimination problem.  In the physical sign convention define
\begin{equation}
 \boxed{
 \Acal_p(X,Y)=
 \Res_a\!\left(
 a^p-\varepsilon_pX,
 (1-a)^p-\varepsilon_pY
 \right),
 \qquad
 \varepsilon_p=(-1)^{p-1}.
 }
 \label{eq:Ap-resultant}
\end{equation}
The curve $\Acal_p=0$ is rationally parametrized by
\begin{equation}
 \boxed{
 X=\varepsilon_pa^p,
 \qquad
 Y=\varepsilon_p(1-a)^p.
 }
 \label{eq:Aparam}
\end{equation}
Homogeneously, writing $a=s/(s+t)$,
\begin{equation}
 [z_1:z_2:z_3]
 =[-(s+t)^p:\varepsilon_ps^p:\varepsilon_pt^p].
 \label{eq:homog-param}
\end{equation}
The six anharmonic transformations
\begin{equation}
 a,\quad1-a,\quad a^{-1},\quad(1-a)^{-1},\quad
 \frac{a}{a-1},\quad\frac{a-1}{a}
 \label{eq:anharmonic}
\end{equation}
permute the three linear forms $s,t,-(s+t)$ and therefore induce the full projective group $S_3$ on $(-1,X,Y)$.

\subsection{Invariant forms for low orders}

Using Eq.~\eqref{eq:sigmas}, the first characteristic covariants are
\begin{align}
 \Acal_2={}&\sigma_1^2-4\sigma_2,\nonumber\\
 \Acal_3={}&-\sigma_1^3+27\sigma_3,\nonumber\\
 \Acal_4={}&(\sigma_1^2-4\sigma_2)^2-128\sigma_1\sigma_3,\nonumber\\
 \Acal_5={}&-\sigma_1^5+625\sigma_1^2\sigma_3-3125\sigma_2\sigma_3,\nonumber\\
 \Acal_6={}&(\sigma_1^2-4\sigma_2)^3
 -2754\sigma_1^3\sigma_3
 -12312\sigma_1\sigma_2\sigma_3
 +185193\sigma_3^2.
 \label{eq:A2A6}
\end{align}
Thus the projective characteristic geometry persists even when the scalar lower-order symmetry does not.

\subsection{Why $p=3$ is symmetry-enhanced}

The lower-order difference is particularly transparent in Euler form.  Define
\begin{equation}
 A_p(t)=\prod_{j=0}^{p-1}(t-j),
 \qquad
 B_p(t)=t\prod_{j=p-2}^{2p-4}(t-j).
 \label{eq:ABpoly}
\end{equation}
After the natural rescaling of $h$, the $x$-operator uses $A_p(\theta_x)$, while the reduced $h$-operator uses $B_p(\theta_h)$.  At $p=3$,
\begin{equation}
 A_3(t)=B_3(t)=t(t-1)(t-2),
 \label{eq:ABp3}
\end{equation}
and the three scalar operators can be permuted by $S_3$.  For $p\ge4$ the root sets differ, so this scalar enhancement is lost.  Nevertheless $A_p(t)$ and $B_p(t)$ have the same leading term $t^p$, which is why the full projective $S_3$ survives at the principal-symbol level and hence in $\Acal_p$.

At the companion level there is still a finite residue symmetry.  The deck rotation $x\mapsto\zeta x$, $\zeta^q=1$, and the reciprocal reflection act on the $q$ residue components as a dihedral group $D_q$.  For $q=3$ this is $D_3\simeq S_3$; the third-order case $q=2$ is further enlarged by Eq.~\eqref{eq:ABp3} to the tetrahedral $S_4$ geometry described above.

\section{Higher algebraic members of the reduced hierarchy}
\label{sec:higher-reduced}

The general Horn structure does not by itself prove algebraicity.  The following higher cases were reconstructed from exact series and then verified by implicit differentiation modulo the proposed algebraic equation.

\subsection{$p=4$: an exact sextic}

Set $H=h^3$.  The reduced fourth-order function satisfies a monic sextic
\begin{equation}
 P_4(x,H,F)=0,
 \qquad \deg_F P_4=6.
 \label{eq:p4-poly}
\end{equation}
The full coefficients are listed in Appendix~\ref{app:p4}.  Its boundary factorization is
\begin{equation}
 \boxed{
 P_4(x,0,F)=
 \frac{F\,[27F+(1+x)^3]^3
 [27F^2+9x(1+x)F+x^2(x^2+x+1)]}{3^{12}}.
 }
 \label{eq:p4boundary}
\end{equation}
The discriminant is
\begin{equation}
 \boxed{
 \Disc_F P_4
 =3^{-64}H^4x^6
 \Acal_4(x^3,9H)^3\Bcal_4(x^3,H)^2,
 }
 \label{eq:p4disc}
\end{equation}
where
\begin{align}
\Bcal_4(X,H)={}&(X-1)(X+1)^4
+18H(11X^4-260X^3+482X^2-260X+11)\nonumber\\
&+81H^2(145X^3+1809X^2-1809X-145)\nonumber\\
&+5832H^3(19X^2-387X+19)
+5458752H^4(1-X)+30233088H^5.
 \label{eq:B4}
\end{align}
A specialization at $x=H=1$ has Galois group $S_6$, and the good-specialization argument gives the generic group
\begin{equation}
 \boxed{\Gal(P_4/\mathbb Q(x,H))\simeq S_6.}
 \label{eq:p4gal}
\end{equation}
The factor $\Acal_4^3$ is the projectively symmetric characteristic ramification.  The factor $\Bcal_4^2$ corresponds to ordinary lower-order collisions; locally the former has cusp-type splitting $y^2\sim\varepsilon^3$, whereas the latter is nodal.

\subsection{$p=5$: an exact octic}

For $H=h^4$ the reduced fifth-order function satisfies an exact monic octic
\begin{equation}
 P_5(x,H,F)=0,
 \qquad \deg_F P_5=8,
 \label{eq:p5poly}
\end{equation}
verified by exact implicit differentiation.  Its characteristic covariant is
\begin{equation}
 \boxed{
 \Acal_5
 =-\sigma_1^5+625\sigma_1^2\sigma_3-3125\sigma_2\sigma_3.
 }
 \label{eq:A5again}
\end{equation}
The discriminant has the form
\begin{equation}
 \boxed{
 \Disc_F P_5
 =-2^{-240}x^{12}H^9
 \Acal_5(x^4,16H)^3\Bcal_5(x^4,H)^2.
 }
 \label{eq:p5disc}
\end{equation}
Here $\deg_H\Bcal_5=13$.  The full octic and $\Bcal_5$ are supplied as ancillary symbolic data; the formulas are too long to illuminate the main geometry in print.

At $x=H=1$ the specialization is irreducible modulo a good prime and exhibits Frobenius cycle types forcing
\begin{equation}
 \boxed{\Gal(P_5/\mathbb Q(x,H))\simeq S_8.}
 \label{eq:p5gal}
\end{equation}

\subsection{$p=6$: degree ten and the $S_{10}$ specialization}

For $H=h^5$ an exact specialization at $x=1$ satisfies an irreducible degree-ten equation
\begin{equation}
 P_{6,1}(H,F)=0,
 \qquad \deg_F P_{6,1}=10.
 \label{eq:p6decic}
\end{equation}
Its boundary is
\begin{equation}
 \boxed{
 P_{6,1}(0,F)
 =\frac{F(125F+32)^5(125F^2+75F+11)^2}{5^{21}}.
 }
 \label{eq:p6boundary}
\end{equation}
The characteristic covariant is
\begin{equation}
 \boxed{
 \Acal_6=(\sigma_1^2-4\sigma_2)^3
 -2754\sigma_1^3\sigma_3
 -12312\sigma_1\sigma_2\sigma_3
 +185193\sigma_3^2.
 }
 \label{eq:A6again}
\end{equation}
Independent generic slices reproduce
\begin{equation}
 \Disc_F P_6
 \doteq H^{16}\Acal_6(x^5,25H)^3\Bcal_6(x^5,H)^2,
 \qquad \deg_H\Bcal_6=24.
 \label{eq:p6disc-slices}
\end{equation}
At $x=H=1$, modular factorization gives a transitive degree-ten group containing a $7$-cycle and an odd $10$-cycle; Jordan's theorem then yields
\begin{equation}
 \boxed{\Gal(P_{6,1}(1,F)/\mathbb Q)\simeq S_{10}.}
 \label{eq:p6gal}
\end{equation}
This is strong evidence for the generic pattern summarized below.

\subsection{The emerging sequence}

The exact and specialized data may be summarized as
\begin{equation}
\begin{array}{c|c|c|c}
 p&\text{fiber degree}&\text{Galois evidence}&\text{discriminant core}\\ \hline
2&2&S_2&\Acal_2\\
3&4\ \text{(quartic field)}&S_4&\Acal_3^3\\
4&6&S_6&\Acal_4^3\Bcal_4^2\\
5&8&S_8&\Acal_5^3\Bcal_5^2\\
6&10&S_{10}\ \text{specialization}&\Acal_6^3\Bcal_6^2
\end{array}
\label{eq:degree-table}
\end{equation}
The third-order sextics are resolvents of the quartic field rather than minimal equations for a single preferred primitive element, which is why the visible sextic degree there should not be confused with the field degree four.

\section{Universal boundary factorization}
\label{sec:boundary}

The higher equations become particularly transparent on the boundary $H=0$.  Let again $q=p-1$ and define
\begin{equation}
 \beta_q(x)=\frac{(-1)^q}{q^3}(1+x)^q,
 \label{eq:betaq}
\end{equation}
\begin{equation}
 \beta_{q,\zeta}(x)=
 \frac{(-1)^q}{q^3}\left[(1+x)^q-(1+\zeta x)^q\right],
 \qquad \zeta^q=1.
 \label{eq:betazeta}
\end{equation}
Notice that $\beta_{q,1}=0$.

\begin{conjecture}[Universal boundary factorization]
For the minimal reduced algebraic covering, the boundary polynomial is
\begin{equation}
 \boxed{
 P_p(x,0,F)=
 F\,[F-\beta_q(x)]^q
 \prod_{\substack{\zeta^q=1\\\zeta\ne1}}
 [F-\beta_{q,\zeta}(x)].
 }
 \label{eq:boundary-general}
\end{equation}
Equivalently, the residual factor is
\begin{equation}
 \boxed{
 \Rcal_q(x,F)=q^{-3(q-1)}
 \Res_z\!\left(
 \frac{z^q-1}{z-1},
 q^3F-(-1)^q[(1+x)^q-(1+zx)^q]
 \right).
 }
 \label{eq:boundary-resultant}
\end{equation}
\end{conjecture}
The formula is exact in the known cases $p=2,3,4,5$ and agrees with the exact $p=6$ specialization.  Its degree count is immediate:
\begin{equation}
 \boxed{1+q+(q-1)=2q=2(p-1).}
 \label{eq:degree-count}
\end{equation}
Thus the observed sequence $2,4,6,8,10$ has a direct boundary-sheet interpretation.

There are also two immediate discriminant consequences.  As $x\to0$,
\begin{equation}
 \beta_{q,\zeta}(x)
 =\frac{(-1)^q}{q^2}(1-\zeta)x+O(x^2).
 \label{eq:beta-smallx}
\end{equation}
The $q$ deck-labelled roots therefore coalesce linearly, contributing
\begin{equation}
 \boxed{x^{2\binom q2}=x^{q(q-1)}=x^{(p-1)(p-2)}}
 \label{eq:xdisc-power}
\end{equation}
to the discriminant.

The reduced $H$-operator has a fractional indicial exponent
\begin{equation}
 \lambda_0=\frac{q-1}{q}.
 \label{eq:indicial}
\end{equation}
The $q$-fold boundary root $F=\beta_q(x)$ therefore splits generically as
\begin{equation}
 F_j(H)=\beta_q(x)+c_jH^{(q-1)/q}+\cdots,
 \qquad j=1,\ldots,q,
 \label{eq:Hsplit}
\end{equation}
which contributes
\begin{equation}
 \boxed{
 H^{2\binom q2(q-1)/q}=H^{(q-1)^2}=H^{(p-2)^2}.
 }
 \label{eq:Hdisc-power}
\end{equation}
This explains the powers $H^4,H^9,H^{16}$ in the exact discriminants for $p=4,5,6$.

Combining the boundary geometry with the universal characteristic curve suggests the master factorization
\begin{equation}
 \boxed{
 \Disc_F P_p
 \doteq
 x^{(p-1)(p-2)}H^{(p-2)^2}
 \Acal_p\!\left(x^{p-1},(p-1)^2H\right)^3
 \Bcal_p(x^{p-1},H)^2.
 }
 \label{eq:master-disc}
\end{equation}
For $p=3$ the lower-order factor disappears: $\Bcal_3=1$.  The known degrees
\begin{equation}
 \deg_H\Bcal_4=5,
 \qquad
 \deg_H\Bcal_5=13,
 \qquad
 \deg_H\Bcal_6=24
 \label{eq:Bdegrees}
\end{equation}
suggest
\begin{equation}
 \boxed{
 \deg_H\Bcal_p=\frac{(p-3)(3p-2)}{2}.
 }
 \label{eq:Bdegree-conj}
\end{equation}
The vanishing at $p=3$ matches exactly the symmetry-enhancement condition $A_3=B_3$.

\section{The higher self-dual hierarchy and the $p=4$ contrast}
\label{sec:selfdual-higher}

The reduced hierarchy is algebraic in the tested higher cases, but the generic self-dual $p=4$ function behaves differently.  Its coefficient recursion is
\begin{equation}
 x\frac{d^4}{dx^4}S_{k-1}^{(4)}
 =(3k)(3k+1)(3k+2)(3k+3)S_k^{(4)},
 \label{eq:sd-p4-rec}
\end{equation}
with $S_0^{(4)}=x^3/(1+x^3)$.  The first numerator rows are
\begin{equation}
\begin{array}{c|rrrrrrrrr}
0&2\\
1&-5&17&-5\\
2&8&-192&772&-772&192&-8\\
3&-11&858&-12155&52832&-84801&52832&-12155&858&-11
\end{array}
\label{eq:p4sd-triangle}
\end{equation}
with alternating reciprocal symmetry.

Extensive algebraic searches on generic slices and on the diagonal do not produce a low-degree algebraic relation.  This negative evidence is not a proof of transcendence, but the bridge theorem gives a concrete mechanism.  The last reduced companion satisfies
\begin{equation}
 \boxed{
 \Phi_3(X,Y)=\frac{3\theta_Y+2}{2}\Scal_4(X,Y).
 }
 \label{eq:p4bridge}
\end{equation}
Hence
\begin{equation}
 \boxed{
 \Scal_4(X,Y)=\frac23Y^{-2/3}
 \int_0^Y t^{-1/3}\Phi_3(X,t)\,dt.
 }
 \label{eq:p4primitive}
\end{equation}
Since $\Phi_3$ is algebraic as a deck projection of the reduced sextic, the self-dual function is an algebraic primitive problem.  Algebraic primitives need not be algebraic; residues, periods, or the algebraic de Rham class of the differential in Eq.~\eqref{eq:p4primitive} provide a concrete route to a transcendence proof.

There is nevertheless a distinguished algebraic blow-up on the reciprocal mirror.  With
\begin{equation}
 X=-1+\varepsilon,
 \qquad
 H=\varepsilon^4t,
 \end{equation}
and an appropriate rescaling of the self-dual function, the limit is
\begin{equation}
 \boxed{
 M(t)=-t\,{}_4F_3\!\left(
 \begin{matrix}1,\frac14,\frac12,\frac34\\[1mm]
 \frac43,\frac53,2\end{matrix};-256t\right),
 }
 \label{eq:mirror-hypergeom}
\end{equation}
which satisfies a quartic algebraic equation.  The factor $1+256t$ in its discriminant is precisely the blow-up limit of $\Acal_4$.  Thus the special algebraic mirror reduction and the generic apparent transcendence are both controlled by the same projective characteristic geometry.

\section{General conjectures and open problems}
\label{sec:conjectures}

The calculations above suggest the following general picture.

\begin{conjecture}[Reduced algebraicity and degree]
For every $p\ge4$, the distinguished reduced generating function is algebraic over $\mathbb Q(x,H)$ and its minimal polynomial has degree
\begin{equation}
 \boxed{\deg_F P_p=2(p-1).}
 \label{eq:degree-conj}
\end{equation}
\end{conjecture}

\begin{conjecture}[Generic Galois group]
For generic parameters,
\begin{equation}
 \boxed{
 \Gal(P_p/\mathbb Q(x,H))\simeq S_{2(p-1)}.
 }
 \label{eq:galois-conj}
\end{equation}
\end{conjecture}
The exact cases $p=4,5$ give $S_6,S_8$, and the tested $p=6$ specialization gives $S_{10}$.

\begin{conjecture}[Universal discriminant]
The generic discriminant is given by Eq.~\eqref{eq:master-disc}, with $\deg_H\Bcal_p$ given by Eq.~\eqref{eq:Bdegree-conj}.
\end{conjecture}

The main structural results, however, do not depend on these conjectures: the Horn decomposition, the differential bridge, and the projective characteristic curve are valid at the level of the differential recursions themselves.

Several problems remain.  The first is to prove algebraicity of the reduced hierarchy for arbitrary $p$.  The boundary factorization suggests a $2(p-1)$-sheeted global covering, but a direct construction of its minimal polynomial is not yet known.  The second is to derive the exponent three of $\Acal_p$ in the discriminant directly from a universal local normal form.  The third is to understand $\Bcal_p$ invariant-theoretically rather than coefficient-by-coefficient.  Finally, the self-dual bridge converts the higher self-dual problem into the exactness of an algebraic differential; already at $p=4$ this may lead to a proof of nonalgebraicity.

\section{Conclusions}

The differential recursions studied here begin with elementary rational functions and integer numerator triangles, but lead to a surprisingly rigid algebraic geometry.  The reduced hierarchy decomposes into $p-1$ Horn companions, one of which is connected to the self-dual hierarchy by a differential operator of order $p-3$.  Their common principal symbol produces a rational projective curve $\Acal_p=0$ carrying a universal $S_3$ action.

The two low orders are symmetry-enhanced.  Narayana's quadratic is the $p=2$ double cover $Z_2^2=\Acal_2$.  At $p=3$ the lower-order Euler polynomials coincide, the scalar $S_3$ symmetry is restored, and the lifted geometry becomes tetrahedral.  The two visibly different third-order numerator triangles A336110 and A140136 lead to two algebraic edge representations of the same $S_4$ quartic field.

Beyond this exceptional point the scalar symmetry breaks, but the reduced generating functions remain algebraic in the tested cases.  The exact sextic and octic at $p=4,5$ and the degree-ten $p=6$ specialization exhibit the sequence $S_6,S_8,S_{10}$ and a common discriminant pattern.  The boundary degeneration explains both the degree $2(p-1)$ and the monomial powers of $x$ and $H$, while the factor $\Acal_p^3$ retains the universal projective geometry.

The resulting picture is therefore not a collection of isolated algebraic generating functions.  It is a hierarchy in which coefficient triangles, Horn systems, projective discriminants, Galois groups and differential intertwiners fit into one structure.  The remaining challenge is to replace the higher-order computational evidence by a global construction of the reduced algebraic covering and to determine precisely when the inverse bridge to the self-dual hierarchy remains algebraic.

\appendix

\section{The exact reduced $p=4$ sextic}
\label{app:p4}

For $H=h^3$ the monic polynomial is
\begin{equation}
 P_4=F^6+A_5F^5+A_4F^4+A_3F^3+A_2F^2+A_1F+A_0,
\end{equation}
where
\begin{equation}
 A_5=\frac{(1+x)(x^2+5x+1)}9-2H,
\end{equation}
\begin{align}
A_4={}&\frac{x^6+15x^5+60x^4+83x^3+60x^2+15x+1}{243}\nonumber\\
&-\frac{2}{27}(3x^3+15x^2+15x+2)H+\frac43H^2,
\end{align}
\begin{align}
A_3={}&\frac{(1+x)^3}{19683}
(x^6+33x^5+204x^4+263x^3+204x^2+33x+1)\nonumber\\
&-\frac{2H}{729}
(3x^6+36x^5+126x^4+166x^3+114x^2+24x+1)\nonumber\\
&+\frac{3x^3+48x^2+48x-5}{81}H^2-\frac8{27}H^3,
\end{align}
\begin{align}
A_2={}&\frac{x(1+x)^6}{59049}(x^4+13x^3+15x^2+13x+1)\nonumber\\
&-\frac{2x(1+x)^2}{19683}
(x^6+25x^5+138x^4+241x^3+226x^2+99x+9)H\nonumber\\
&+\frac{x}{2187}(12x^5+27x^4+243x^3+422x^2+171x-45)H^2\nonumber\\
&+\frac{2x(23x^2-12x-12)}{243}H^3,
\end{align}
\begin{align}
A_1={}&\frac{x^2(1+x)^9(x^2+x+1)}{531441}\nonumber\\
&-\frac{2x^2(1+x)^5}{177147}(2x^4+19x^3+29x^2+27x+9)H\nonumber\\
&+\frac{x^2(1+x)}{59049}
(4x^6+68x^5+85x^4+729x^3+588x^2+513x-135)H^2\nonumber\\
&-\frac{4x^2}{6561}(2x^4-69x^3-51x^2+45x+18)H^3
-\frac{4x^3}{27}H^4,
\end{align}
\begin{align}
A_0={}&-\frac{2x^3(1+x)^8(x^2+x+1)}{1594323}H\nonumber\\
&+\frac{x^3(1+x)^4}{531441}(4x^4+24x^3+57x^2+57x+33)H^2\nonumber\\
&-\frac{2x^3}{531441}
(4x^6+36x^5-585x^4+324x^3-135x^2+486x-486)H^3\nonumber\\
&-\frac{x^3(36x^2+36x-31)}{2187}H^4
+\frac{8x^3}{243}H^5.
\end{align}

\section{Computational verification and ancillary data}

For the algebraic cases the verification is independent of finite-order fitting.  If $P(x,H,F)=0$, implicit differentiation gives
\begin{equation}
 F_x=-\frac{P_x}{P_F},
 \qquad
 F_H=-\frac{P_H}{P_F}.
\end{equation}
Repeated total differentiation expresses every derivative appearing in the differential equation as a rational function of $(x,H,F)$.  After clearing denominators, the numerator is divided by $P$ as a polynomial in $F$.  In the exact $p=4$ and $p=5$ cases the remainder vanishes identically.  The same method verifies the exact $p=6$ specialization.

The large $p=5$ octic, the degree-ten $p=6$ specialization, the factors $\Bcal_5,\Bcal_6$ on the tested slices, and the modular Galois-group checks are best supplied as machine-readable ancillary files rather than printed as multi-page coefficient lists.

\end{document}